\pgfplotsset{compat=1.12}
\newcolumntype{P}[1]{>{\centering\arraybackslash}p{#1}}
\newcommand{\vs}{\vspace{-0.1in}}
\newcommand{\calD}{\mathcal{D}}
\newcommand{\bfB}{\mathbf{B}}
\newcommand{\bfI}{\mathbf{I}}
\newcommand{\bfR}{\mathbf{R}}
\newtheorem{claim}{Claim}
\newtheorem{remark}{Remark}
\title{Leveraging partial stragglers within gradient coding}
\author{%
  Aditya Ramamoorthy \quad Ruoyu Meng \quad Vrinda S. Girimaji\\
  Department of Electrical and Computer Engineering\\
  Iowa State University\\
  Ames, IA 50010. \\
  \texttt{\{adityar, rmeng, vrindasg\}@iastate.edu} 
}
\begin{document}

\maketitle

\begin{abstract}
Within distributed learning, workers typically compute gradients on their assigned dataset chunks and send them to the parameter server (PS), which aggregates them to compute either an exact or approximate version of $\nabla L$ (gradient of the loss function $L$). However, in large-scale clusters, many workers are slower than their promised speed or even failure-prone. A gradient coding solution introduces redundancy within the assignment of chunks to the workers and uses coding theoretic ideas to allow the PS to recover $\nabla L$ (exactly or approximately), even in the presence of stragglers. Unfortunately, most existing gradient coding protocols are inefficient from a computation perspective as they coarsely classify workers as operational or failed; the potentially valuable work performed by slow workers (partial stragglers) is ignored. In this work, we present novel gradient coding protocols that judiciously leverage the work performed by partial stragglers. Our protocols are efficient from a computation and communication perspective and numerically stable. For an important class of chunk assignments, we present efficient algorithms for optimizing the relative ordering of chunks within the workers; this ordering affects the overall execution time. For exact gradient reconstruction, our protocol is around $2\times$ faster than the original class of protocols and for approximate gradient reconstruction, the mean-squared-error of our reconstructed gradient is several orders of magnitude better.

\end{abstract}

\section{Introduction}
Large scale distributed learning is the workhorse of modern day machine learning (ML) algorithms. The sheer size of the data and the corresponding computation needs, necessitate the usage of huge clusters for the purpose of parameter fitting in most ML problems of practical interest: deep learning \cite{GoodBengCour16}, low-rank matrix completion \cite{jain2013lrmc} etc.  A typical scenario consists of a dataset $\mathcal{D} = \{(\mathbf{x}_i, y_i)\}_{i = 1}^{\tilde{N}}$ of $\tilde{N}$ data points, where $\mathbf{x}_i$'s and $y_i$'s are the features and labels respectively. We wish to minimize a loss function $ L = \frac{1}{\tilde{N}}\sum_{i = 1}^{\tilde{N}} l(\mathbf{x}_i, y_i, \mathbf{w})$ with respect to $\mathbf{w} \in \mathbb{R}^d$ ($\mathbf{w}$: parameter vector, $l$: prediction error). 
When $\calD$ is large, we can perform the learning task in a distributed manner \cite{li2014scaling}. 

{\noindent {\bf Background:}} We partition $\calD$ into $N$ equal-sized chunks denoted $\calD_i, i \in [N]$ ($[n]$ denotes the set $\{1, \dots, n\}$), where a chunk is a subset of the data points and distinct chunks are disjoint. Within each chunk, the assignment of the data points to the workers is identical. Suppose that there are $m$ workers $W_1, \dots, W_m$ and a parameter server (PS). We distribute the chunks to the different workers and let them compute the gradients on the data points assigned to them. The PS coordinates the training by aggregating the (partial) gradients from the workers and transmitting an updated 
parameter vector to the workers at each iteration. In the ``baseline'' scheme, $N = m$, $W_j$ is assigned $\calD_j$ and it computes $\sum_{i \in \mathcal{D}_j}  \nabla l(\mathbf{x}_i, y_i, \mathbf{w}_t)$ (a vector of length-$d$) and sends it to the PS. Using these, the PS computes the desired gradient $\nabla L = \frac{1}{\tilde{N}}\sum_{i=1}^{\tilde{N}} \nabla l(\mathbf{x}_i, y_i, \mathbf{w}_t)$ and the updated parameter  $\mathbf{w}_{t+1}$ thereafter. Unfortunately, in many large scale clusters, workers are often slower than their promised speed or even prone to failure. This is especially true in cloud platforms, where the load often fluctuates depending on system load and spot instance pricing \cite{GoogleCP} explicitly builds in the possibility of job preemption. To address these issues, gradient coding (GC) introduced in \cite{tandon_gradient} incorporates redundancy within the assignment of chunks to the workers. Once a given worker calculates the gradient on all its assigned chunks, it computes a specified linear combination of these gradients and sends it to the PS. An exact gradient coding solution allows the PS to exactly recover $\nabla L$ even in the presence of limited node failures. 

Let $A \in \mathbb{R}^{N \times m}$ be a matrix such that $A_{i,j} \neq 0$ if and only if $\calD_i$ is assigned to $W_j$; henceforth, we call this the assignment matrix. Let $nnz(v)$ denote the number of non-zero entries in a vector $v$. Let $\gamma_i$  and $\delta_j$ denote $nnz(A(i,:)$ and $nnz(A(:,j))$ (using MATLAB notation). These correspond respectively to the number of times $\calD_i$ appears in the cluster (replication factor) and the number of chunks assigned to $W_j$ (load factor). We assume that at most $s$ workers out of $m$ are stragglers. Let $g_{\calD_j} = \sum_{i \in \calD_j} \nabla l(\mathbf{x}_i, y_i, \mathbf{w}_t)$, so that $\nabla L = \sum_{j=1}^N g_{\calD_j}$. At iteration $t$, worker $W_j$ calculates $g_{\calD_i}$ for all $i \in \text{supp}(A(:,j))$ (non-zero entries in $A(:,j)$) and linearly combines them to obtain $g_j = \sum_{i =1}^N A_{i,j} g_{\calD_i}.$
It subsequently transmits $g_j$ to the PS. For decoding $\nabla L$, the PS picks a decoding vector $r$ which is such that $r_j=0$ if $W_j$ has not transmitted $g_j$ (we say that $W_j$ is straggling in this case). It subsequently calculates
\begin{align}
\sum_{j=1}^m r_j g_j &= \sum_{i=1}^N  \left( \sum_{j=1}^m   A_{i,j} r_j \right) g_{\calD_i}. \label{eq:grad_cod_eq}
\end{align}
\noindent {\bf Related Work:} Under the original GC model \cite{tandon_gradient}, for {\it exact gradient coding}, we want that $Ar = \mathds{1}$ (the all-ones vector) under any choice of at most $s$ stragglers. This means that the PS can perform a full gradient update. For exact GC, we need $\gamma_i \geq s+1$ for all $i \in [N]$. This setting has been studied extensively \cite{dimakis_cyclic_mds,reedsolomon_GC18,chen2018lag,tieredGC20,dynamicClusteringGC23,numerically_stableGC20}. {\it Approximate gradient coding} \cite{dimakis_cyclic_mds} considers the scenario where the full gradient is either not required (e.g., SGD \cite{bottou_optimization} works with an inexact gradient) or too costly to work with because of the high replication factor needed. In this setting, we want to design $A$ such that $|| A r- \mathds{1}||_2$ ($\ell_2$-norm) is small over all possibilities for the straggling workers. Prior work demonstrates constructions from expander graphs \cite{dimakis_cyclic_mds}, sparse random graphs \cite{charles2017approximate} and \cite{glasgowW21}, block designs \cite{kadheKR19, soft_BIBD_GC22} and the like. Within distributed training, a significant time cost is associated with the transmission of the computed gradients (vectors of length-$d$) by the workers to the PS \cite{alistarh2017qsgd,wang2023cocktailsgd}, e.g., deep learning usually operates in the highly over-parameterized regime \cite{Belkin_2019} ($d \gg \tilde{N}$).  For exact GC, if $\gamma_i \geq s + \ell$ for $i \in [N]$, then the dimension of the transmitted vectors from the workers can be lowered to $d/ \ell$ \cite{YeA18,tayyebehM21}, thus saving on communication time. This is referred to as {\it communication-efficient GC} and allows us to trade-off communication for computation. However, both \cite{YeA18} and \cite{tayyebehM21} use polynomial interpolation in their solution. This leads to significant numerical instability \cite{Pan16} to the extent that their solution is essentially unusable for systems with twenty or more workers. This point is also acknowledged within the papers: Section V of \cite{tayyebehM21} and Section II of \cite{YeA18}. Some work considering these issues appears in \cite{kadheKR20} under restrictive parameter assumptions. 

The usage of the partial work performed by stragglers has been considered \cite{dynamicClusteringGC23,TauzD19,adaptiveGC22, hetero_aware_GC22,opt_block_coord_GC21,anytime_GC17} only within exact GC (i.e., approximate GC has not been considered); some of these apply in the communication-efficient setting. However, these approaches use multiple messages from the workers to the PS and thus incur a higher communication cost per iteration. 

\begin{figure}[t!]
    \hspace{-0.1in}
    \begin{subfigure}[t]{0.45\textwidth}
        \centering
        \includegraphics[scale = 0.7]{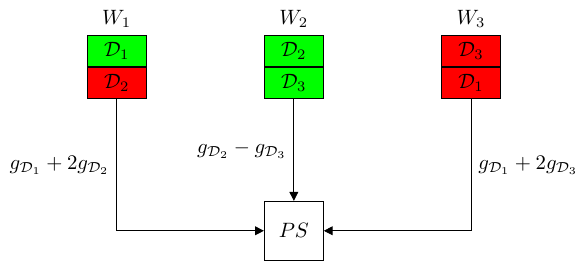}
        \caption{\label{fig:original_GD_2}}
    \end{subfigure}
    \hspace{0.1in}
    \begin{subfigure}[t]{0.45\textwidth}
        \centering
        \includegraphics[scale=0.7]{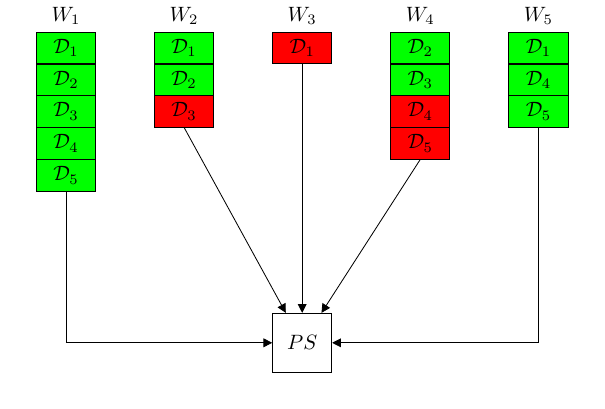}
        \caption{\label{fig:partial_straggler_GD}}
    \end{subfigure}

    \caption{{\small Green/red means that the worker did/didn't process a chunk. (a) System with $N=m=3$. Each worker is assigned two chunks that they process in a top-to-bottom order.  $W_3$ is failed and $W_1$ is slow. (b) An arbitrary assignment of chunks to the workers (example also appears in \cite{tayyebehM21}).}}
    %
\end{figure}



{\noindent {\bf Motivation:}} Consider Fig. \ref{fig:original_GD_2} where the edge labels indicate the encoded gradients. Note that under ideal operating conditions when each worker operates at the same speed, the overall gradient can be computed as long as each $W_i$ processes its first chunk $\calD_i$ for $i=1, \dots, 3$. Thus, it is quite wasteful for the workers to continue processing their second assigned chunk as specified in the original GC scheme; it would double the computation time. In addition, the original GC formulation ignores partial work by slow but not failed workers; in the sequel, we refer to these as ``partial stragglers''. For instance, in Fig. \ref{fig:original_GD_2}, we consider a scenario, where $W_1$ is slow and $W_3$ is failed. The state of the computation is such that there is enough information for the PS to obtain the full gradient. However, under the original GC model, $W_1$ will either wait until it processes $\calD_2$ before generating the encoded gradient to be sent to the PS, or the PS will treat $W_1$ as failed and will have to settle for an approximate gradient. 
Gradient coding can be viewed as an application of coding-theoretic techniques \cite{lincostello} to distributed learning; it allows the PS to be resilient to failures with very little coordination in the cluster. We note here that within classical coding theory \cite{lincostello} most constructions of erasure codes do not consider feedback from the receiver to the sender since such feedback may be expensive or noisy when the sender and receiver are at remote locations or not even necessary (\cite{CoverThomas2006}, Chap. 7). However, feedback is quite easy to implement in the distributed learning setup.


{\noindent {\bf Main Contributions:}} 
We present a new GC protocol that exploits a small amount of additional interactive communication to/from the PS and the workers. It  greatly improves the computation-efficiency and communication-efficiency of GC while continuing to allow for efficient coordination within the cluster. Specifically, our protocol efficiently leverages the chunks processed by partial stragglers. Prior work that potentially applies in our setting suffers from the serious problem of numerical instability. In contrast, our protocol is provably numerically stable.

To our best knowledge, despite the importance of communication-efficiency, there are hardly any schemes for communication-efficient approximate GC. Our protocol provides an elegant way to address this problem, which in addition allows the PS to obtain an accurate estimate of the mean-square-error of the reconstructed gradient at any given point in the computation.

Prior work in the GC area ignores the relative ordering of the chunks within the workers. As our protocol leverages partial work performed by the workers, the relative ordering of the chunks within the workers is an important factor in the performance of the algorithm. For a large class of assignment matrices, we present an efficient polynomial-time algorithm that returns the optimal ordering of the chunks within workers.



\section{Gradient Coding for partial stragglers}

\label{sec:gc_partial_stragglers}
Our GC protocol operates under the following assumptions: (i) the workers know the assignment matrix and the ordering of the chunks within all the workers, (ii) at regular intervals the workers keep communicating to the PS, the number of chunks they have processed, and (iii) the PS wants the workers to communicate vectors of length $d/\ell$ for integer $\ell \geq 1$.


We now provide a top-level description of our protocol; a formal statement appears in Algorithm \ref{alg:find-enc-coeff}.
At the beginning of the training process, the PS generates a random $\ell \times m$ matrix $\mathbf{R}$ with i.i.d. entries from a standard normal distribution, $N(0,1)$ and shares it with all the workers. The workers keep reporting the number of chunks that they have processed in an iteration. The PS keeps monitoring these counts, and at a certain point, it broadcasts to all the workers an ``encode-and-transmit'' signal and an integer vector $\psi$ of length-$m$  that specifies the number of chunks that have been processed by each node. This shares the global system state amongst the workers. For exact GC, the PS sends the encode-and-transmit signal when at least $\ell$ copies of each $\calD_i$ have been processed across the cluster. For approximate GC, the PS can send the signal even earlier.

Following this, the workers need to decide their own encoding coefficients for the gradients that they have computed. Each gradient $g_{\calD_i}$ is block-partitioned into $\ell$ disjoint parts $g_{\calD_i}[k], k = 0, \dots, \ell-1$. Each worker forms a matrix $\bfB$ of dimension $m \times N \ell$ that consists of indeterminates that specify the encoding coefficients of all the workers (see example in the upcoming Section \ref{sec:illustrative_eg}). Matrix $\bfB$ consists of $N$ block-columns, where each block-column itself consists of $\ell$ columns of length-$m$, i.e., $\bfB = [\bfB^{(1)}~|~\bfB^{(2)}~|~ \dots ~|~ \bfB^{(N)}]$. The $j$-th block column, $\bfB^{(j)}$ is associated with $g_{\calD_j}[k], k = 0, \dots, \ell-1$. Based on the global state vector $\psi$, all workers know whether a chunk $\calD_j, j\in [N]$ has been processed by worker $W_i, i\in [m]$. If $\calD_j$ has been processed by $W_i$, then the $i$-th row of $\bfB^{(j)}$ is populated with indeterminates, otherwise these entries are set to zero. Once the indeterminates are found (see Algorithm \ref{alg:find-enc-coeff} for a description), worker $W_{\beta}$ encodes its gradients as
\begin{align}
    g_\beta &= \sum_{i=1}^N \sum_{j=0}^{\ell-1} \bfB^{(i)}_{\beta, j} g_{\calD_i}[j].
\end{align}

Let $\vec{e}_i$ denote the $i$-th canonical basis vector of length-$\ell$, i.e., it contains a $1$ at location $i \in \{0, \dots, \ell-1\}$ and zeros elsewhere. We denote
\begin{align}
    z_i &=  \mathds{1}_{N} \otimes \vec{e}_i = \overbrace{[\vec{e}_i^T~\vec{e}_i^T~\dots~\vec{e}_i^T]^T}^{\text{$N$ copies of $\vec{e}_i$}}. \label{eq:kron_prod_z}
\end{align}
where $\otimes$ denotes the Kronecker product and $\mathds{1}_N$ denotes the all-ones vector of length $N$. Recall that in the exact GC scenario, the PS wants to obtain $\sum_{i=1}^N g_{\calD_i}[k]$ for $k = 0, \dots, \ell-1$. This is equivalent to requiring that
\begin{align*}
    z_i^T \in \text{row-span}\left( \bfB \right), \text{~for~} i = 0 , \dots, \ell-1.
\end{align*}
Our protocol is such that each $W_i$ can independently calculate their encoding coefficients, such that collectively all the workers agree on the same matrix $\bfB$ with the same values assigned to all the indeterminates.
Towards this end, let $\tilde{\bfB}_j$ denote the submatrix of $\bfB$ that is relevant to $W_j$, i.e., the block-columns in which the processed chunks of $W_j$ participate. For instance, suppose that $W_j$ has processed $\alpha_j \leq \delta_j$ chunks $\calD_{i_1}, \dots, \calD_{i_{\alpha_j}}$ where $1 \leq i_1 < i_2 < \dots < i_{\alpha_j} \leq N$. Then, 
\begin{align}
    \tilde{\bfB}_j &= [\bfB^{(i_1)}~\bfB^{(i_2)}~\dots~\bfB^{(i_{\alpha_j})} ]. \label{eq:extracted_block_cols}
\end{align}
$W_j$ then solves a minimum-$\ell_2$-norm least-squares solution to determine its encoding coefficients (see \eqref{eq:ls_min_algo} in Algorithm \ref{alg:find-enc-coeff}). This ensures the solution is unique \cite{horn_matrix_analysis} even if the corresponding problem is under-determined. Thus, the workers automatically agree on the same solution.


\begin{algorithm}[!t]
\caption{Find-Encoding-Coeff}
\begin{algorithmic}[1]
\REQUIRE $\ell \times m$ matrix $\bfR$, $m$-length state vector $\psi$, $\delta_i$ the number of chunks processed by $W_i$.
\ENSURE Encoding coefficients for the $i$-th worker $\varepsilon_i$.
\STATE $W_i$ forms the indeterminate matrix $m \times N \ell$ matrix $\bfB$ based on $\psi$.
\STATE $W_i$ extracts the columns of $\bfB$ that correspond to its processed chunks. This matrix is called $\tilde{\bfB}_i$ ({\it cf.} \eqref{eq:extracted_block_cols}).
\STATE Worker $W_i$ solves the following minimum-$\ell_2$-norm least-squares problem, where the variables are the indeterminates in $\tilde{\bfB}_i$.
\begin{align}
    \min ||   \mathds{1}_{\delta_i}^T \otimes I_\ell  - \bfR \tilde{\bfB}_i||_2. \label{eq:ls_min_algo}
\end{align}
\STATE Set $\varepsilon_i = \bfB(i,:)$.
\end{algorithmic}
\label{alg:find-enc-coeff}
\end{algorithm}
\begin{remark}
    If $\ell=1$, then it is possible to arrive at a protocol whereby the workers agree to transmit appropriately weighted partial sums of their gradients, so that the PS can exactly/approximately recover the sum. However, in the communication-efficient setting when $\ell > 1$, the encoding is no longer straightforward. Thus, $\ell > 1$ is the main scenario we consider in what follows.
\end{remark}
\begin{remark}
    We discussed that each worker can form the $m \times N\ell$ matrix of indeterminates $\bfB$ for the sake of ease of explanation. In fact, $W_j$ only works with $\tilde{\bfB}_j$, which is of size at most $m \times \delta \ell$.
\end{remark}
\begin{remark}
The additional communication assumed in our protocol is only $O(m)$ as against the parameter vector of length-$d$ and $O(m) \ll d$. Thus, the additional communication cost of our algorithm is minimal.
\end{remark}

\subsection{Analysis of Algorithm \ref{alg:find-enc-coeff}}
It is evident from our description and Algorithm \ref{alg:find-enc-coeff} that upon receiving the encode-and-transmit signal and the vector $\psi$, each worker can independently create the matrix of indeterminates $\bfB$.

{\bf Exact GC analysis:} Suppose that each $\calD_i, i \in [N]$ has been processed at least $\ell$ times across the cluster, and suppose that $W_i$ has processed $\calD_j$. This means that there is a block-column $\bfB^{(j)}$ such that each column within $\bfB^{(j)}$ has at least $\Delta \geq \ell$ indeterminates that need to be assigned values. For one column within $\bfB^{(j)}$, $W_i$ will solve a system of equations that is specified by $\ell \times \Delta$ submatrix of $\bfR$ denoted $X$ (an example appears in Section \ref{sec:illustrative_eg}). Note that the columns of $X$ will be linearly independent with high probability owing to the choice of $\bfR$ and since $\Delta \geq \ell$, a solution is guaranteed to exist. Let $\kappa_2(M)$ denote the condition number of matrix $M$. For a random $\ell \times \Delta$ matrix with i.i.d. $N(0,1)$ entries, it is known that $\mathbb{E} (\log \kappa_2(X)) \leq O(\log \Delta)$ \cite{ChenD05}. Thus, each such system of equations is well-conditioned with very high probability. In the under-determined case when $\Delta > \ell$, each worker will still agree on the same values of the corresponding indeterminates since we enforce that we work with the (unique) minimum $\ell_2$-norm solution; no additional communication between the workers is required. 

{\bf Approximate GC analysis:} It is possible that the PS sends the encode-and-transmit vector when some $\calD_i$ has been processed $\Delta \leq \ell-1$ times. In this case, the corresponding step for $\bfB^{(i)}$ in \eqref{eq:ls_min_algo} will be an over-determined least-squares procedure, which implies that there will be a non-zero error associated with it. Let $X$ be the relevant $\ell \times \Delta$ submatrix of $\bfR$ where we have $\Delta < \ell$ now, and recall that all entries of $X$ are i.i.d. $N(0,1)$ random variables. The squared error corresponding to $\bfB^{(i)}$ can be expressed as
$\sum_{i=0}^{\ell-1} ||XX^{\dagger} \vec{e}_i - \vec{e}_i||_2^2$ ($X^\dagger$ denotes the pseudo-inverse \cite{horn_matrix_analysis}). Let $X = U S V^T$ denote the SVD of $X$, where $U$ and $V$ are orthogonal matrices of dimension $\ell \times \ell$ and $\Delta \times \Delta$ respectively and $S = [D ~|~ 0]^T$ where $D$ is a $\Delta \times \Delta$ matrix with positive entries on the diagonal, and $0$ represents a $\Delta \times (\ell-\Delta)$ matrix of zeros. Then, $X^\dagger = V [D^{-1} ~|~ 0] U^T$. It is well known (\cite{versh_book}, Remark 5.2.8) that for a matrix with i.i.d. $N(0,1)$ entries, the singular vectors are uniformly distributed on the unit-sphere. Therefore, the expected squared error becomes
\begin{align}
    \mathbb{E}[||XX^{\dagger} \vec{e}_i - \vec{e}_i||_2^2] = \mathbb{E} [|| \sum_{j = \Delta + 1}^{\ell} u_j u_j^T \vec{e}_i||_2^2]
    &= \sum_{j = \Delta + 1}^{\ell} \mathbb{E}[(u_j^T \vec{e}_i)^2]
    = \frac{\ell-\Delta}{\ell}. \label{eq:error_estimate_LS}
\end{align}
The last step above follows since each $u_j$ is uniformly distributed over the sphere of dimension $\ell$. Therefore, we have the $\mathbb{E}[u_{j,0}^2] = 1/\ell$ since $||u_j||_2^2 = 1$ and each $u_{j,k}, k = 0, \dots, \ell-1$ is identically distributed. We conclude that if $\calD_i$ appears $\Delta_i \leq \ell-1$ times, then its error contribution is $\ell-\Delta_i$ and the overall error is therefore $\sum_{i=1}^N \max(0,\ell-\Delta_i)$.

{\bf Complexity analysis:} The time complexity of each least-squares problem is $O(\Delta^2 \ell)$  \cite{boyd} and the $i$-th worker solves at most $\ell \delta_i$ of them independently and in parallel. The marginal cost of this calculation as against the calculation of the actual gradients will be very small in most practical settings. 

\subsection{Illustrative Example}
\label{sec:illustrative_eg}

Consider Fig. \ref{fig:partial_straggler_GD} (from \cite{tayyebehM21}) where the dataset consists of chunks $\calD_1, \dots, \calD_5$ and are assigned in a non-uniform fashion to workers $W_1, \dots, W_5$. 
Suppose that the PS wants the exact gradient with $\ell=2$. As two copies of each chunk have been processed, the PS broadcasts the encode-and-transmit signal and the vector $\psi = [5~2~0~2~3]$ to all the workers which indicates, e.g., that $W_1$ has processed all its chunks, $W_3$ is failed etc. The matrix $\bfB$ of indeterminates for this example and the corresponding $\bfB^{(i)}$'s, turn out to be
\setcounter{MaxMatrixCols}{20}
\begin{align}
    \mathbf{B} &= [\bfB^{(1)} \vert \bfB^{(2)} \vert \bfB^{(3)} \vert \bfB^{(4)}\vert \bfB^{(5)}]\\
    &=
    \begin{bmatrix}
    a_1 & a_2 & \vert & a_3 & a_4 & \vert & a_5 & a_6 & \vert & a_7 & a_8 & \vert & a_9 & a_{10}\\
    b_1 & b_2 & \vert & b_3 & b_4 & \vert & 0   & 0   & \vert &  0  & 0   & \vert & 0 & 0\\
    0  & 0   & \vert & 0   &  0  & \vert & 0   & 0   & \vert &  0  & 0   & \vert & 0 & 0\\
    0   & 0   & \vert & c_3   & c_4   & \vert & c_5 & c_6 & \vert & 0 & 0 & \vert & 0 & 0\\
    d_1 & d_2 & \vert & 0   & 0   & \vert & 0   & 0   & \vert & d_7 & d_8 & \vert & d_{9} & d_{10}
    \end{bmatrix}. \label{eq:indeterminates}
\end{align}
In this example, $W_5$ has processed $\calD_1, \calD_4, \calD_5$ so that $\tilde{B}_5 = [\bfB^{(1)}~\bfB^{(4)}~\bfB^{(5)}]$. Note that the PS requires the vectors $[\vec{e}_0^T~\vec{e}_0^T~\vec{e}_0^T~\vec{e}_0^T~\vec{e}_0^T]$ and $[\vec{e}_1^T~\vec{e}_1^T~\vec{e}_1^T~\vec{e}_1^T~\vec{e}_1^T]$ to lie in the row-space of $\bfB$ so that it can recover $\sum_{i=1}^5 g_{\calD_i}[k], k=0,1$ from the encoded gradients. 
Towards this end, e.g., $W_5$ solves \eqref{eq:ls_min_algo} in Algorithm \ref{alg:find-enc-coeff} with the matrix $\tilde{B}_5$, i.e.,
\begin{align}
    \min || [\bfI_2 \vert \bfI_2 \vert \bfI_2] - \bfR [\bfB^{(1)}~\bfB^{(4)}~\bfB^{(5)}]||_2
\end{align}
where the decision variables are $a_1, a_2,a_7,a_8,a_9, a_{10}$, $b_1, b_2$ and $d_1, d_2, d_7, d_8, d_9, d_{10}$.
For instance, corresponding to the first column of $\bfB^{(1)}$, $W_5$'s problem becomes determining the minimum $\ell_2$-norm solution of the under-determined least-squares problem 
\begin{align}
    \bigg{|}\bigg{|}
    \begin{bmatrix}
        r_{01} & r_{02} & r_{05}\\
        r_{11} & r_{12} & r_{15}
    \end{bmatrix}
    \begin{bmatrix}
        a_1\\
        b_1\\
        d_1
    \end{bmatrix} - \begin{bmatrix}
        1\\
        0
    \end{bmatrix}
    \bigg{|}\bigg{|}_2.
\end{align}
We emphasize that the same minimization will be independently performed at workers $W_1$ and $W_2$ as well. 
After each $W_j$ determines its encoding coefficients and transmits $g_j$ for $j = 1, \dots, 5$, the PS can easily recover $\sum_{j=1}^5 g_{\calD_j}[k] = \sum_{i=1}^5 r_{kj} g_j$ for $k=0,1$.

Algorithm \ref{alg:find-enc-coeff} works as is, even in the case when the PS is interested in an approximate gradient. For instance, suppose that the PS sends the encode-and-transmit signal when the vector $\psi = [4~2~0~2~3]$, so that only one copy of $\calD_5$ has been processed in the cluster. The corresponding encoding coefficients can still be computed using Algorithm \ref{alg:find-enc-coeff}. The only difference will be that the relevant indeterminate matrix becomes 
\setcounter{MaxMatrixCols}{20}
\begin{align}
    \mathbf{B}' 
    &=
    \begin{bmatrix}
    a'_1 & a'_2 & \vert & a'_3 & a'_4 & \vert & a'_5 & a'_6 & \vert & a'_7 & a'_8 & \vert & 0 & 0\\
    b'_1 & b'_2 & \vert & b'_3 & b'_4 & \vert & 0   & 0   & \vert &  0  & 0   & \vert & 0 & 0\\
    0  & 0   & \vert & 0   &  0  & \vert & 0   & 0   & \vert &  0  & 0   & \vert & 0 & 0\\
    0   & 0   & \vert & c'_3   & c'_4   & \vert & c'_5 & c'_6 & \vert & 0 & 0 & \vert & 0 & 0\\
    d'_1 & d'_2 & \vert & 0   & 0   & \vert & 0   & 0   & \vert & d'_7 & d'_8 & \vert & d'_{9} & d'_{10}
    \end{bmatrix}. \label{eq:indeterminates_approx}
\end{align}
This means, e.g., when $W_5$ is trying to find $d'_{9}$, then it will be solving an over-determined least-squares problem: $\bigg{|}\bigg{|} \begin{bmatrix} r_{05}\\ r'_{05} \end{bmatrix} d'_{9} - \begin{bmatrix} 1\\ 0 \end{bmatrix} \bigg{|}\bigg{|}_2$. It is evident, that all the workers can still agree on the same solution.

    

\begin{figure}[t!]
    \centering
    \begin{subfigure}[t]{0.5\textwidth}
        \centering
        \includegraphics[scale = 1.0]{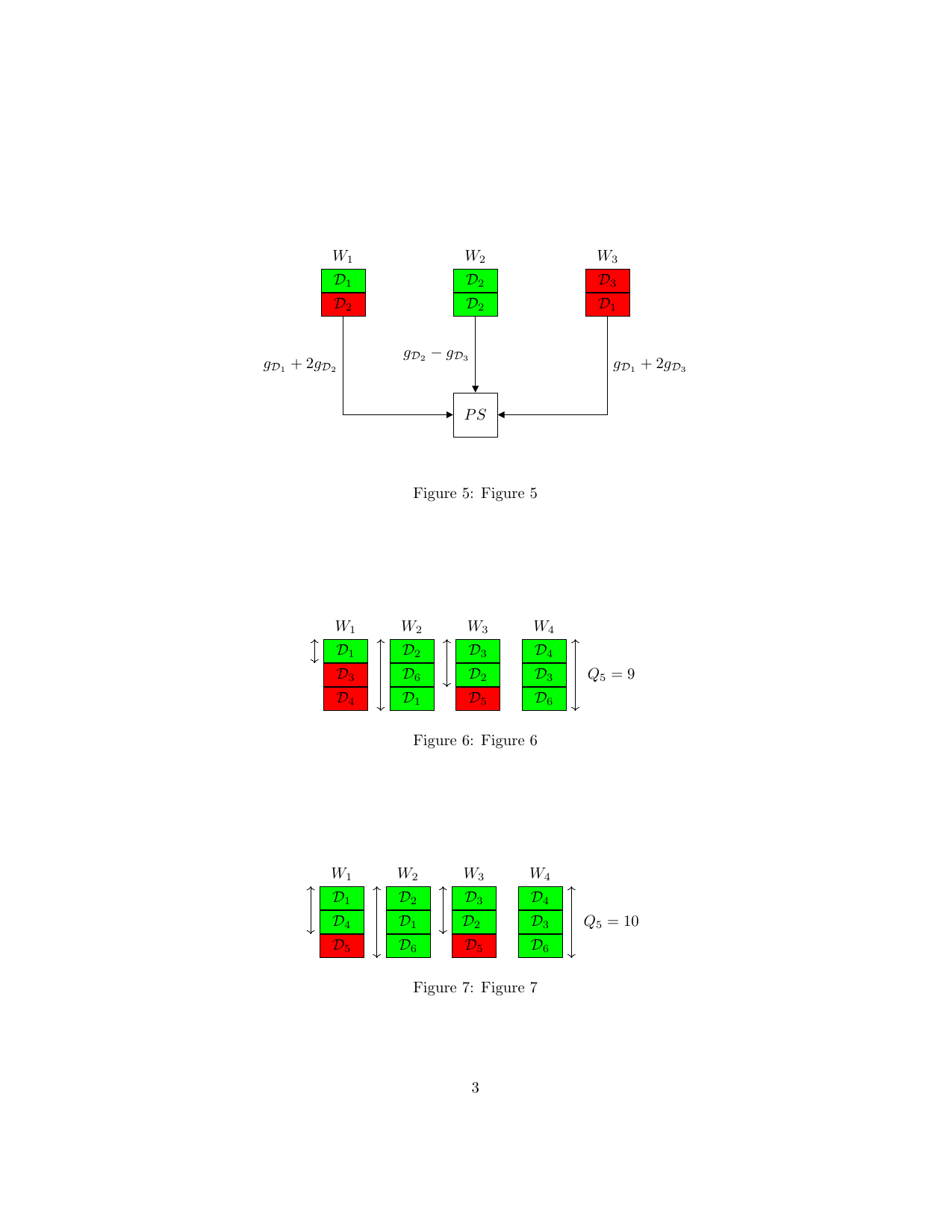}
        \caption{\label{fig:rel_ordering_high}}
    \end{subfigure}%
    ~~~
    \begin{subfigure}[t]{0.5\textwidth}
        \centering
        \includegraphics[scale=1.0]{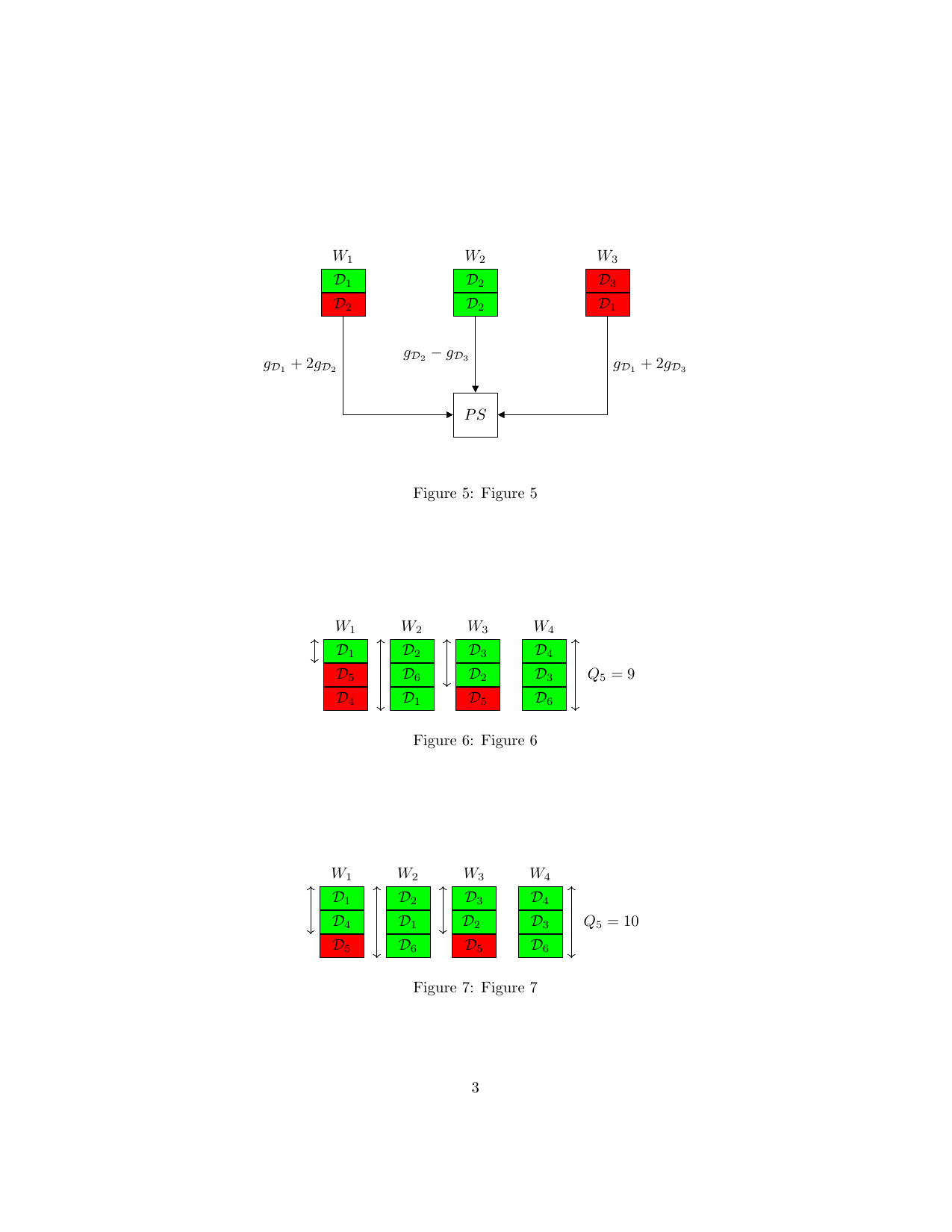}
        \caption{\label{fig:rel_ordering_low}}
    \end{subfigure}
    \caption{\label{fig:rel_orderings} {\small Two different relative orderings of the chunks within workers for the same assignment matrix. Individual figures show the calculation of $Q_5$. Similarly, other $Q_i$ values can be computed. $Q_{\max} = Q_5$ for both assignments. Thus, (a) $Q_{\max} = 10$. (b) $Q_{\max} = 9$.}}
\vs\vs
\end{figure}

\section{Chunk Ordering Algorithm}
\label{sec:chunk_ordering}
Note that the assignment matrix only specifies the assignment of chunks to workers but says nothing about the relative order of chunks within a worker. When taking into account partial stragglers, the relative ordering of the $\calD_i$'s within a worker is crucial. Therefore, an important question when leveraging partial stragglers within GC is one of how to determine this chunk ordering within workers for a given assignment matrix. This will in general depend on models of worker speeds. However,  it is a difficult task to obtain accurate models on the worker speeds within a cloud cluster as conditions can change dynamically. 

We work instead with a combinatorial metric that depends on the total number of chunks that the cluster has to process in the worst case, such that at least a single copy of each $\calD_i$ is processed; this was also used in \cite{c3les,DasR22} in a coded matrix computation context. This metric minimizes the worst case number of chunks that the cluster needs to process in the case when $\ell=1$. In particular, for a given $\calD_i$, let $Q_i$ denote the maximum number of chunks that can be processed by the cluster such that none of the copies of $\calD_i$ are processed (see Figs. \ref{fig:rel_ordering_high} and \ref{fig:rel_ordering_low}) and let $Q_{\max} = \max_{i=1, \dots, N} Q_i$. Thus, $1+Q_{\max}$ is a metric that quantifies the worst-case number of chunks that need to be processed before at least a single copy of every $\calD_i$ is guaranteed to be processed. Here, the worst-case is over the speeds of the different workers. We say that an assignment is optimal if it achieves the lowest possible $Q_{\max}$ for a given assignment matrix. Indeed, Figs. \ref{fig:rel_ordering_high} and \ref{fig:rel_ordering_low} show two different orderings for the same assignment matrix with different values of $Q_{\max}$.

\begin{algorithm}[!t]
\caption{Chunk-Ordering}
\begin{algorithmic}[1]
\REQUIRE Assignment matrix $A$ such that $N=m$ and $\gamma_i = \delta_i = \delta$ for all $i \in [m]$
\ENSURE An optimal ordering matrix $O$.
\STATE Let $\tilde{A}^{(1)} \in \{0,1\}^{m\times m}$ such that $\tilde{A}^{(1)}_{i,j} = \begin{cases}
    1 & \text{~if~} A_{i,j} \neq 0\\
    0 & \text{~otherwise.}
\end{cases}$ and $\delta^{(1)} = \delta$.
\FOR{{$i$ ranges from $1$ to $\delta$}}
        \STATE Apply Claim \ref{claim:bipartite_matching} with $\tilde{A}^{(i)}$ and $\delta^{(i)}$ and solve the max-bipartite matching problem in Claim \ref{claim:bipartite_matching}. Let the permutation matrix be $P_i$.
        \STATE Let $\tilde{A}^{(i+1)}:= \tilde{A}^{(i)}-P_i$ and $\delta^{(i+1)}=\delta^{(i)}-1$.   
\ENDFOR 
\STATE Set $O=\sum_{i\in[\delta]} i P_i$. 
\end{algorithmic}
\label{alg:chunk-ordering}
\end{algorithm}

From the point of view of leveraging partial stragglers, for exact GC it can be shown that the ordering imposed by the cyclic assignment scheme (in \cite{tandon_gradient}) is optimal (({\it cf.} Remark 1 in work \cite{c3les})) for this $Q_{\max}$ metric. However, for approximate GC, when the number of stragglers can be much higher, other assignments, e.g., those based on regular graphs perform better \cite{dimakis_cyclic_mds,charles2018gradient}. In particular, in these cases, the assignment matrix $A$ is chosen as the adjacency matrix of the relevant graphs and is such that $N = m$ and $\gamma_i = \delta_i = \delta$ for $i \in [m]$. For such assignment matrices, Algorithm \ref{alg:chunk-ordering} presents an optimal algorithm for determining the ordering of the chunks within each worker in $Q_{\max}$-metric. Corresponding to the assignment matrix $A$, we can associate an ordering matrix $O$ which is such that $O_{i,j} = 0$ if $A_{i,j} = 0$ and $O_{i,j} \in [\delta]$ otherwise. Thus, if $O_{i,j} = \alpha \neq 0$, it means that $\calD_i$ is the $\alpha$-th chunk in $W_j$, e.g., $\alpha = 1$ implies that the chunk is at the top and $\alpha = \delta$ means that it is at the bottom. 
Let $Q_i(O)$ denote the maximum number of chunks that can be processed across the cluster, such that no copy of $\calD_i$ has been processed. Then, upon inspection, we can see that
\begin{align*}
    Q_i(O) = \sum_{j \in [m]} \mathbf{1}_{ \{O_{i,j} \neq 0\}} (O_{i,j} - 1) + (m-\delta)\delta
    &= \sum_{j \in [m]} \mathbf{1}_{\{O_{i,j} \neq 0\}} O_{i,j} + (m-\delta-1)\delta,
\end{align*}
where the notation $\mathbf{1}_{Y}$ denotes the indicator of $Y$. Next, let $Q_{\max} (O) = \max_{i \in [m]} Q_i(O)$. 
Thus, given an assignment matrix $A$, our goal is to find an ordering matrix, such that $Q_{\max}(O)$ is minimized. As $(m-\delta-1)\delta$ is a constant, this optimization is equivalent to the following min-max problem.
\begin{align}
    \underset{O}{\text{minimize}} \max_{i \in [m]} \sum_{j \in [m]} O_{i,j}.
\end{align}
Each column of $O$ has exactly $\delta$ non-zero entries, with each value in $[\delta]$ appearing exactly once. Counting the sum of the entries in $O$ two different ways yields
\begin{align*}
    m \frac{\delta(\delta+1)}{2} &= \sum_{i \in [m], j \in [m]} O_{i,j} \leq m Q_{\max}(O),
    \text{~so that~}  Q_{\max}(O) \geq \frac{\delta(\delta+1)}{2}.
\end{align*}
It turns out that this bound is in fact achievable. 
Let $\tilde{A}=\tilde{A}^{(1)}$ (defined in Algorithm \ref{alg:chunk-ordering}). Define $\mathbb{G}(\tilde{A}) = (V,E)$  as a bipartite graph on vertex set $V = X \cup Y$ such that $|X|=|Y|=m, X \cap Y = \emptyset$ and $\text{deg}(v) = \delta$ for all $v\in X\cup Y$. For $u \in X, v \in Y$, we have that $(u,v)\in E$ if and only if $\tilde{A}_{u,v} = 1$. Thus, $\tilde{A}$ is in one-to-one correspondence with $\mathbb{G}(\tilde{A})$. Algorithm \ref{alg:chunk-ordering} decomposes $\mathbb{G}(\tilde{A})$ into a collection of disjoint perfect matchings \cite{kleinbergT05} which are then assigned values in $[\delta]$. This gives us the required ordering.

\begin{claim}\label{claim:bipartite_matching}
    Let $\tilde{A}\in \{0,1\}^{m\times m}$ be such that both  $i$-th row sum and $j$-th column sum of $\tilde{A}$ are  $\delta$ for all $i,j\in[m]$. Then there exists a permutation matrix $P\in \{0,1\}^{m\times m}$ such that $\tilde{A}-P \in \{0,1\}^{m\times m}$ and both  $i$-th row sum and $j$-th column sum of $\tilde{A}-P$ are  $\delta-1$ for $i,j\in[m]$.
\end{claim}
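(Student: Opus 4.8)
The plan is to translate the statement into graph-theoretic language and apply Hall's marriage theorem (equivalently, the fact that a $\delta$-regular bipartite graph has a perfect matching). Associate with $\tilde{A}$ the bipartite graph $\mathbb{G}(\tilde{A}) = (X \cup Y, E)$ exactly as defined above: $|X| = |Y| = m$ and $(u,v) \in E$ iff $\tilde{A}_{u,v} = 1$. The hypothesis that every row sum and every column sum of $\tilde{A}$ equals $\delta$ says precisely that $\mathbb{G}(\tilde{A})$ is $\delta$-regular. I would then observe that producing a permutation matrix $P$ with $\tilde{A} - P \in \{0,1\}^{m \times m}$ is the same as producing a perfect matching of $\mathbb{G}(\tilde{A})$: for $\tilde{A} - P$ to avoid the entry $-1$, the support of $P$ must be contained in that of $\tilde{A}$, and a $0/1$ matrix with exactly one $1$ per row and per column whose support lies inside $E$ is by definition a perfect matching. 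Conversely, deleting a perfect matching from a $\delta$-regular bipartite graph lowers every vertex degree by exactly $1$, which is exactly the assertion that all row and column sums of $\tilde{A} - P$ equal $\delta - 1$.

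Hence the crux is to show $\mathbb{G}(\tilde{A})$ has a perfect matching, and the main step is to verify Hall's condition on the $X$ side. For any $S \subseteq X$, count the edges incident to $S$ in two ways: on one hand there are exactly $\delta |S|$ of them; on the other hand every such edge has its other endpoint in the neighborhood $\calN(S) \subseteq Y$, and since every vertex of $Y$ has degree $\delta$, at most $\delta |\calN(S)|$ edges meet $\calN(S)$. Therefore $\delta |S| \le \delta |\calN(S)|$, i.e. $|\calN(S)| \ge |S|$. By Hall's theorem there is a matching saturating $X$, and since $|X| = |Y| = m$ it is perfect; taking $P$ to be the corresponding permutation matrix, the two observations of the previous paragraph complete the proof.

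I do not expect a genuine obstacle here: the result is classical and the double-counting argument is routine; the only care needed is the bookkeeping confirming that subtracting the matching's permutation matrix keeps all entries in $\{0,1\}$ and decrements each row/column marginal by exactly one. An alternative, equally short route is to note that $\tfrac1\delta\tilde{A}$ is doubly stochastic, invoke the Birkhoff--von Neumann theorem to write it as a convex combination of permutation matrices, and pick any permutation matrix occurring with positive weight; its support is automatically contained in that of $\tilde{A}$. Either way, iterating this claim $\delta$ times (as in Algorithm \ref{alg:chunk-ordering}) decomposes $\mathbb{G}(\tilde{A})$ into $\delta$ disjoint perfect matchings \cite{kleinbergT05}, which is the fact the ordering construction relies on.
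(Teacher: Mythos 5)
Your argument is correct and follows essentially the same route as the paper's proof: verify Hall's condition for $\mathbb{G}(\tilde{A})$ via the edge double-count $\delta|S| \le \delta|\calN(S)|$, extract a perfect matching, and identify it with the permutation matrix $P$. The Birkhoff--von Neumann alternative you mention is a valid shortcut but adds nothing beyond the matching argument already used in the paper.
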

\begin{proof}
We claim that $\mathbb{G}(\tilde{A})$ has a $X$-perfect matching. Let $S\subseteq X$ be arbitrary and $N(S) \subseteq Y$ denote its neighborhood. Let $\kappa$ denote the average degree of the vertices in $N(S)$ in the subgraph induced by $S \cup N(S)$.
Then, we have 
\begin{align*}
    \delta |S| &= \kappa |N(S)| \leq \delta |N(S)|, \text{~so that~}  |S| \leq |N(S)|.
\end{align*}
The first inequality above follows because the degree of each node in $N(S)$ in $\mathbb{G}(\tilde{A})$ is $\delta$. Thus, Hall's condition \cite{vanlint_wilson_2001} holds and the claim follows.
Since $|X|=|Y|$, this is actually a perfect matching $M$. 
Then, this perfect matching $M$ gives the desired $P$: $P_{u,v}=1$ if $(u,v)\in M$ and $P_{u,v}=0$ if $(u,v)\notin M$. Removing the matching $M$ from $\mathbb{G}(\tilde{A})$ results in a bi-regular bipartite graph with degree $\delta-1$ which corresponds to $\tilde{A} - P$.
\end{proof}

\begin{remark}
    The proposed algorithm above finds the optimal ordering when considering the case of $\ell=1$ with $N=m$ (number of chunks equal to number of workers); it only applies when $N=m$. While this algorithm is expected to have better performance than a randomly chosen ordering in the case of higher $\ell$, we do not have an optimal construction in this case.
\end{remark}
\textbf{Algorithm \ref{alg:chunk-ordering} Analysis:} 
The algorithm gives $\delta$ permutation matrices $\{P_i\}_{i\in[\delta]}$ such that $\tilde{A} = \sum_{i=1}^\delta P_i$, i.e., the set of non-zero entries of the $P_i$'s are disjoint. Since $O=\sum_{i\in[\delta]} i\cdot P_i$, this implies that each column has exactly $\delta$ non-zero entries such that these entries consists of elements of $[\delta]$.  Therefore, $O$ is an ordering matrix. In addition, each row has $\delta$ non-zero elements from $[\delta]$, so that all row sums and $Q_{\max}(O)$ equal $\frac{\delta(\delta+1)}{2}$. A maximum matching can be in time $O(m^2\delta)$ by converting it to a max-flow problem \cite{kleinbergT05}, so our overall complexity is $O(m^2\delta^2)$. The complexity can potentially be reduced further by using more efficient max-flow algorithms.



\section{Numerical Experiments and Comparisons}
\label{sec:num_exp_comp}
Our proposed protocol in Section \ref{sec:gc_partial_stragglers}, utilizes additional communication between the PS and the workers. We note here that ideas in \cite{tayyebehM21} that are based on Lagrange interpolation can potentially be adapted to arrive at an exact GC protocol within our setting. 
%
%
However, the numerical instability of Lagrange interpolation is a serious issue with this solution, since it can be shown that the degree of the polynomial to be interpolated is at least $m - \ell$. Even for values such as $\ell=2$ and $m = 22, 27, 32$, the error in Lagrange interpolation is too high for the technique to be useful (see Appendix \ref{sec:LIF_instability}).  

\begin{figure}[t!]
    \centering
    \begin{subfigure}[t]{0.5\textwidth}
        \centering
        \includegraphics[scale = 0.25]{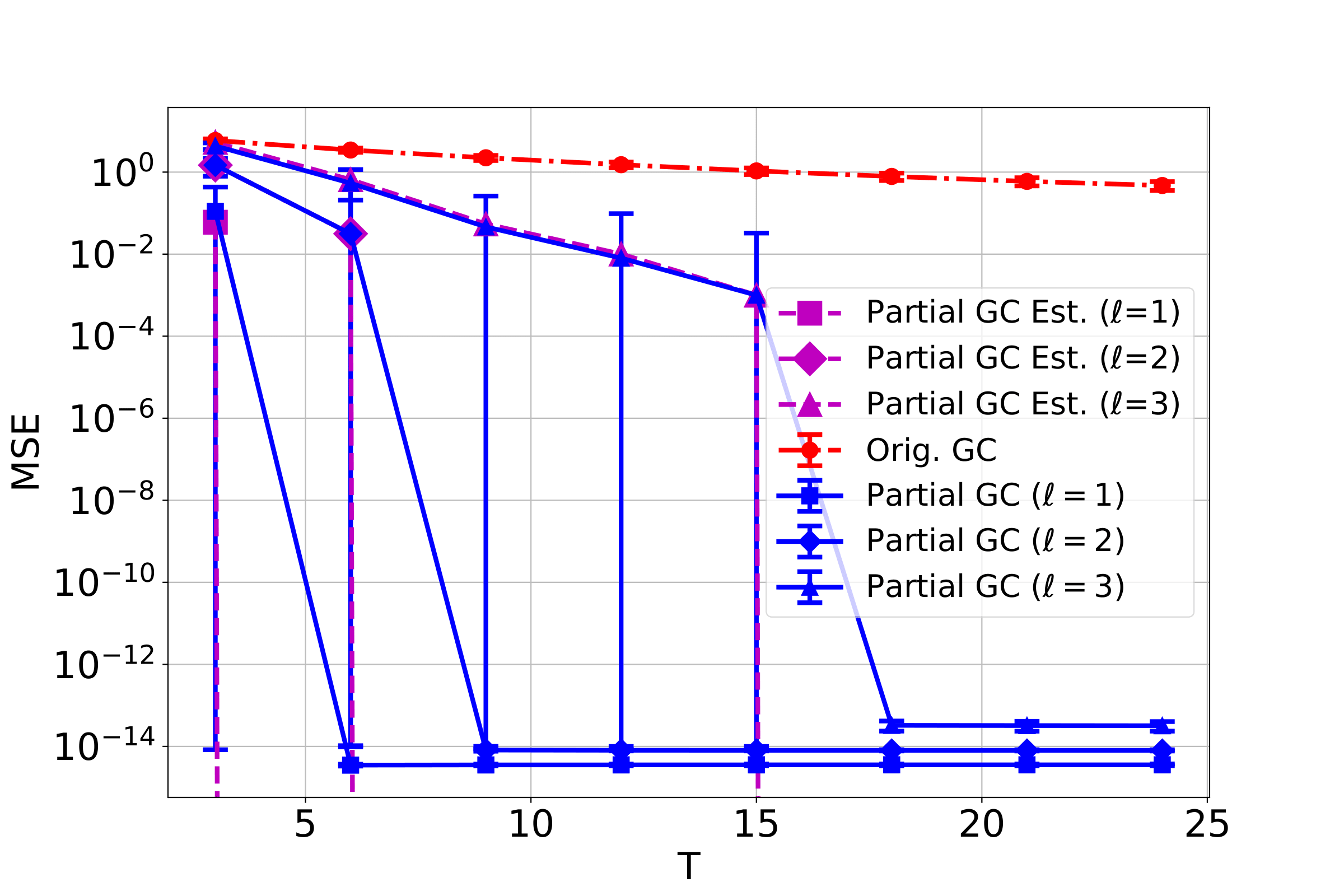}
        \caption{\label{fig:error_G200}}
    \end{subfigure}%
    ~
    \begin{subfigure}[t]{0.5\textwidth}
        \centering
        \includegraphics[scale=0.4]{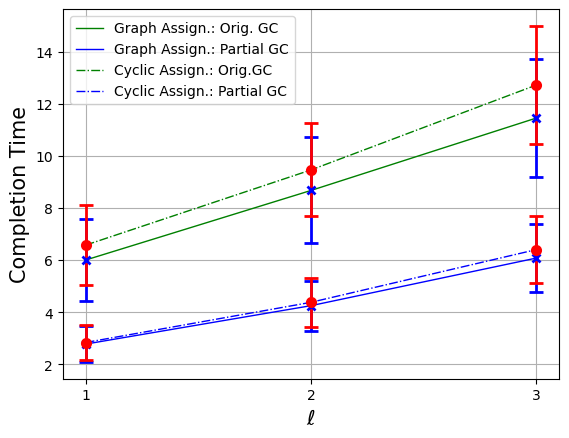}
        \caption{\label{fig:completion_time}}
    \end{subfigure}
    \caption{\label{fig:num_expts} {\small (a) Mean-squared error (MSE) vs. $T$ for an approximate GC scenario. Blue curves: proposed protocol with $\ell=1,2,3$, purple curves: corresponding MSE estimates, and red curve: original GC protocol with $\ell=1$. Error bars correspond to one standard deviation. (b) Completion time vs. $\ell$ for exact GC scenario with two different assignment matrices. Blue curves: proposed protocol, green curves: original GC protocol. Error bars correspond to one standard deviation.}}
\vs\vs
\end{figure}

In what follows, we present comparisons with the original GC protocol via a simulation of the distributed system for both the exact and approximate GC scenarios. All software code for recreating these results can be found at \cite{partialGC_repository}. These simulations were performed within a MacBook Pro (M1 chip, 16 GB RAM). In both scenarios, we simulated node failures and slow-downs. We generated a random vector of length-$m$ where $\alpha$ workers uniformly at random are chosen to be failed ($\alpha$ is chosen based on the scenario). For the other workers, the amount of time taken to process a chunk is chosen i.i.d. from an exponential distribution with parameter $\mu=1$. The entries of the matrix $\bfR$ are chosen i.i.d. from the standard normal $N(0,1)$.

{\noindent {\bf Approximate GC simulation:}} We considered two different random regular graphs, $G_i, i =1,2$ with sizes $m = 200, 300$ and degree $\nu=8$. These graphs have second-largest absolute value of eigenvalues: 5.14 and 5.23, i.e., less than $2\sqrt{\nu-1}$ so they can be considered as Ramanujan graphs \cite{chung_spectral} (these were used in \cite{dimakis_cyclic_mds}). Their adjacency matrices were used as the assignment matrix. The number of failures $\alpha$ was set to $\nu -1$.

For the original GC protocol with arbitrary assignment matrices, \cite{tayyebehM21} provides a communication-efficient approximate GC method that relies on rational interpolation. However, even for $m=100$, these will result in very complex basis functions. Furthermore, there are no numerical results in \cite{tayyebehM21} (or posted software) and the approximation guarantees depend on the form of the function to be interpolated, i.e., the guarantees cannot be expressed in terms of the system parameters. Thus, in our comparison, we only consider the original GC protocol with $\ell =1$.

For the original GC protocol, we compute the least squares solution $\hat{r}$ for minimizing $||Ar - \mathds{1}||^2_2$. Here, $\hat{r}$ such that $\hat{r}_i = 0$ if $W_i$ has not completed processing all its chunks at time $T$. For our proposed protocol, we leverage partial work completed by $T$, as described in Section \ref{sec:gc_partial_stragglers}. The overall error is computed as the sum of the errors for the least-squares solution for each $z_i^T, i = 0, \dots, \ell-1$. Each data point on the curves was chosen by performing 1000 simulations of failures and slow-downs. We have also plotted the expected value of the error, derived in \eqref{eq:error_estimate_LS}.

Fig. \ref{fig:error_G200} shows the mean-squared-error (MSE) comparing the original GC approach with $\ell=1$ and our partial GC approach for $\ell = 1,2,3$ for graph $G_1$ (see Appendix \ref{sec:addl_num_exp} for $G_2$ results). As can be observed, the MSE for our approach is several orders of magnitude lower with increasing $T$. Crucially, our estimate \eqref{eq:error_estimate_LS} closely tracks the behavior of the error of our method. Thus, it can easily be used by the PS as a way to decide when to send the encode-and-transmit message. We emphasize that our approach, even with $\ell \geq 2$ actually has a lower MSE than the original approach (that operates with $\ell=1$). Note that with $\ell \geq 2$, we will enjoy a lower communication time in a real-world distributed cluster. However, as we are working with a simulation of the distributed system, at this time we do not have precise figures for the reduction in communication time on actual clusters.

In Fig. \ref{fig:num_random_expts} in Appendix \ref{sec:addl_num_exp}, we compare the performance of our approach under the optimal ordering ({\it cf.} Section \ref{sec:chunk_ordering}) and an appropriately chosen random ordering. The random ordering is picked as follows. We generated 100 independent random orderings and selected the one with the best (smallest) $Q_{\max}(O)$. Our optimal ordering, which can be found efficiently, clearly has a better performance.


{\noindent {\bf Exact GC simulation:}} We considered (i) the cyclic assignment \cite{tandon_gradient} with $N=m=200$ and $\delta = 8$, and (ii) the assignment based on graph $G_1$ discussed above. The number of failures $\alpha$ in the simulations is set to $\delta -\ell$ so that exact gradient reconstruction is possible. For both approaches, we determined the time $T$ such that each chunk is processed at least $\ell$ times across the cluster (for original GC we only consider workers that have processed all their chunks). These values were averaged over 1000 runs for each value of $\ell$. In Fig. \ref{fig:completion_time} we clearly observe that the exact gradient can be computed using our method is approximately half the time as compared to the original GC protocol. We note that the average completion time for the original GC protocol ($G_1$-based assignment) for $\ell=1$ is about $T=6$ time units. However, the MSE for the original GC protocol in the approximate scenario continues to be high at $T=24$. The reason is that there are $\nu-1$ failures that are introduced in the simulation. The approximate GC recovery operates by solving a least-squares problem for the fixed $G_1$-based assignment. Thus, the MSE does not necessarily drop to zero even if one copy of each chunk has been processed in the cluster. However, there are exact recovery algorithms that one can use in this case. We note here that when $\ell \geq 2$, the known techniques for exact recovery for the original GC protocol are based on Lagrange interpolation and are numerically unstable. Thus, the gradient recovered using the original approach will in general not be useful.

Fig. \ref{fig:rand_comp_time} in Appendix \ref{sec:addl_num_exp} compares the completion times of random vs. optimal ordering; the optimal ordering is clearly better. 


\section{Limitations of our work}
 Our chunk ordering algorithm (Section \ref{sec:chunk_ordering}) is optimal only in the case when the assignment matrix has $N=m$ and $\gamma_i = \delta_i = \delta$ for $i \in [m]$. While several well known gradient coding schemes, especially those from regular graphs, satisfy this property, there are others that do not. Our results in Section \ref{sec:num_exp_comp} with $\ell \geq 2$ are simulations of the actual distributed cluster and indicate lower MSE than the original GC protocol. However, we do not have actual cloud platform statistics on the reduction in communication time within our method. We do however expect this reduction to be quite significant.
\section{Conclusions}

We presented a novel gradient coding protocol that leverages the work performed by partial stragglers. Our protocol is simultaneously computation-efficient and communication-efficient, and numerically stable. Furthermore, we present rigorous analyses of the protocol's correctness and performance guarantees. Our protocol is one of the first to provide a satisfactory solution to the problem of communication-efficient, approximate gradient coding for general assignment matrices. 


\bibliographystyle{IEEETran}

\newpage
\appendices
\section{Numerical Instability of Lagrange Interpolation.}
\label{sec:LIF_instability}
Even for values such as $\ell=2$ and $m = 22, 27, 32$, the error in Lagrange interpolation is too high for the technique of \cite{tayyebehM21} to be useful. To see this consider Fig. \ref{fig:lif_error} which shows the error (averaged over 100 random trials) in first interpolating and then evaluating the interpolated Lagrange polynomial (degrees 20, 25 and 30) at specific points (as is done in \cite{tayyebehM21}); the $x$-axis is the precision. It can be observed that the error even with full-precision is too high for the technique to be useful.

\begin{figure}[!b]
\vs\vs
\begin{center} 
\includegraphics[scale=0.5]{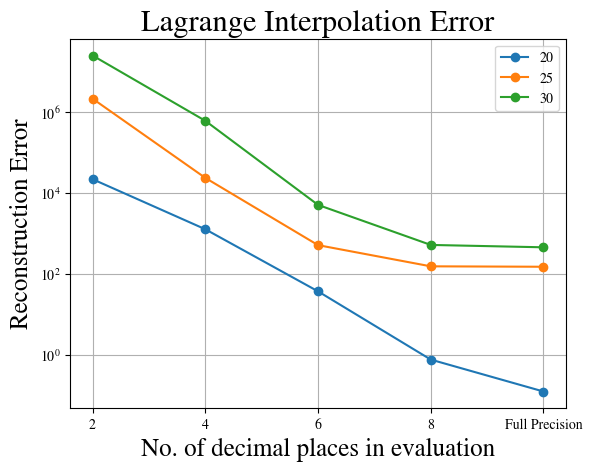}
\caption{\label{fig:lif_error} {\small Error in Lagrange interpolation vs. the number of decimal places (precision) in the evaluation values. The three curves correspond to polynomials of degree 20, 25 and 30 (average of 100 trials). }}
\end{center}
\end{figure}

\section{Additional numerical experiments}
\label{sec:addl_num_exp}

Fig. \ref{fig:error_G300} shows the mean-squared-error (MSE) comparing the original GC approach with $\ell=1$ and our partial GC approach for $\ell = 1,2,3$ for graph $G_2$ with $m=300$ vertices. The results are similar in spirit to the results for the case of $G_1$ that has 200 vertices. Namely, our MSE is orders of magnitude lower than the original GC approach, even when we consider $\ell \geq 2$.

\begin{figure}[!b]
\begin{center} 
\includegraphics[scale=0.25]{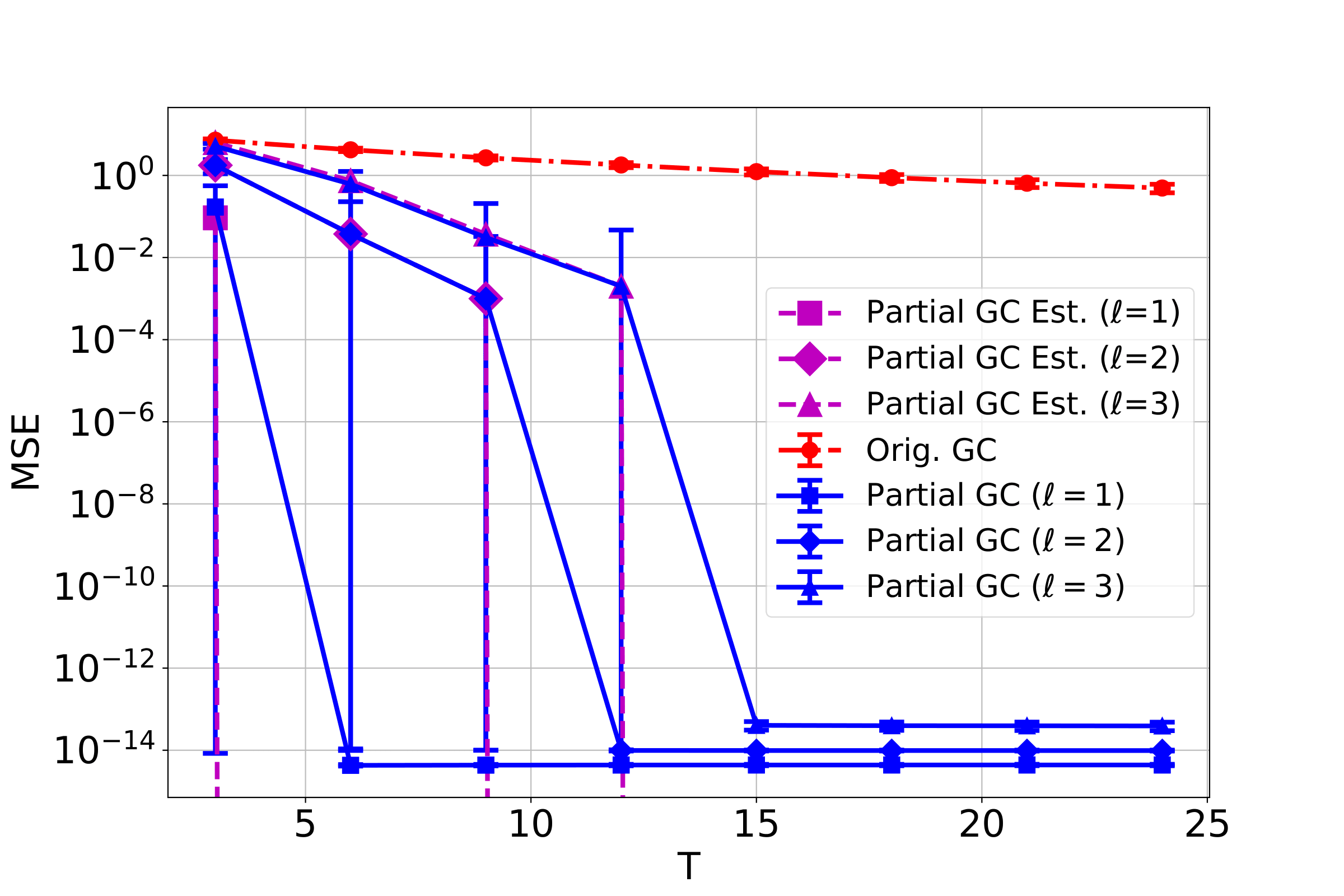}
\caption{\label{fig:error_G300} {\small Mean-squared error (MSE) vs. $T$ for an approximate GC scenario corresponding to an assignment matrix stemming from graph $G_2$. Error bars correspond to one standard deviation. Blue curves: proposed protocol with $\ell=1,2,3$, purple curves: corresponding MSE estimates, and red curve: original GC protocol with $\ell=1$. }}
\end{center}
\end{figure}

In Figs. \ref{fig:error_random_G200} and \ref{fig:error_random_G300} we study the impact of chunk ordering within the workers for the assignment matrices corresponding to graphs $G_1$ and $G_2$ respectively. Each data point on the curves corresponds to 1000 simulations (setup described in Section \ref{sec:num_exp_comp}). In particular, in Fig. \ref{fig:error_random_G200} corresponding to the case of $\ell=1,2,3$ for $G_1$, we observe that the MSE of the optimal ordering consistently remains lower than the MSE of the random ordering and can in fact be multiple orders of magnitude lower when the encode-and-transmit signal is sent at certain time intervals. A similar pattern can be observed in Fig. \ref{fig:error_random_G300}, which shows the case of $\ell=1,2,3$ and $G_2$. 

\begin{figure}[t!]
    \centering
    \begin{subfigure}[t]{0.5\textwidth}
        \centering
        \includegraphics[scale = 0.25]{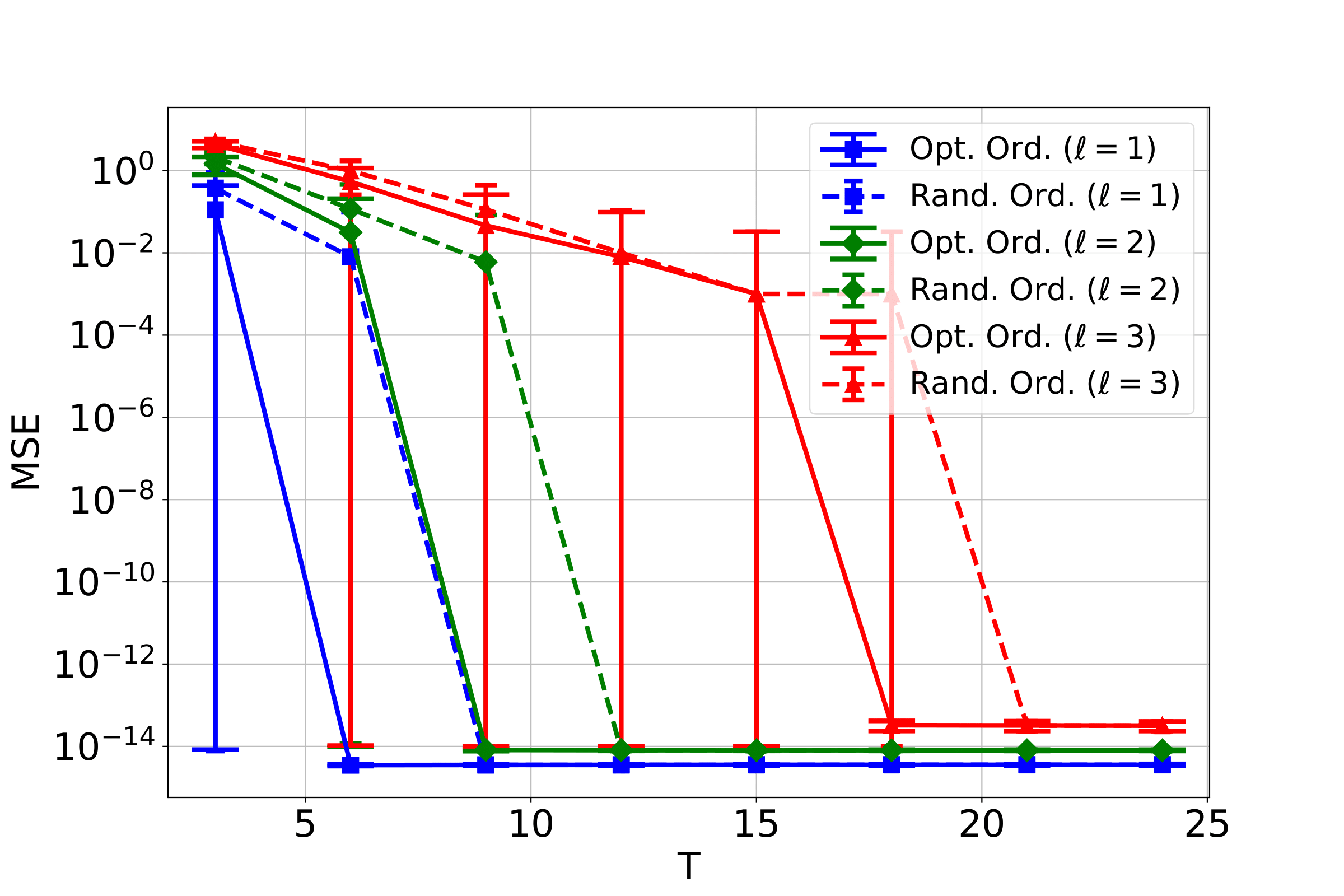}
        \caption{\label{fig:error_random_G200}}
    \end{subfigure}%
    ~
    \begin{subfigure}[t]{0.5\textwidth}
        \centering
        \includegraphics[scale=0.25]{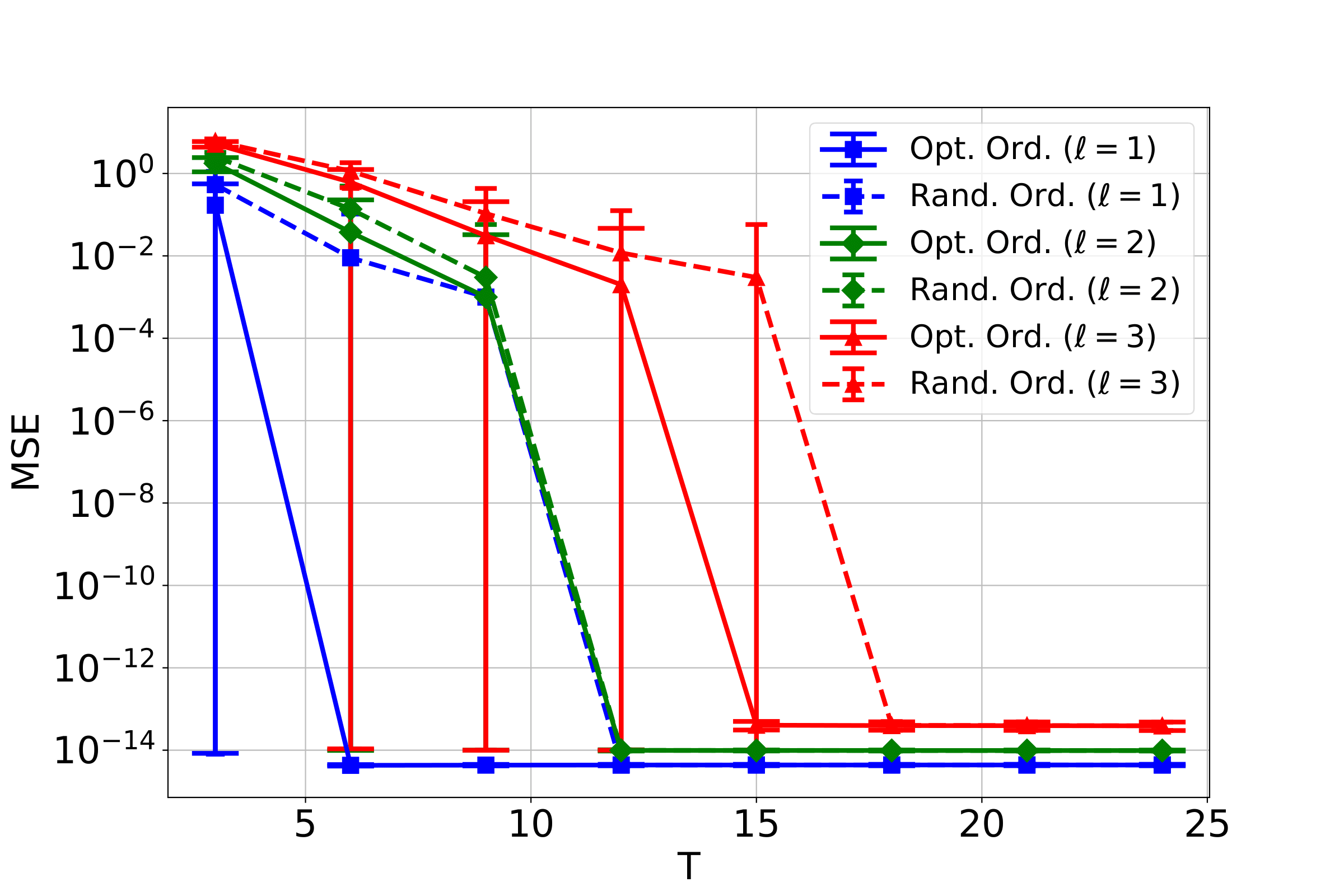}
        \caption{\label{fig:error_random_G300}}
    \end{subfigure}
    \caption{\label{fig:num_random_expts} {\small  Mean-squared error (MSE) vs. $T$ for the approximate GC scenario when considering random chunk ordering and optimal chunk ordering within our protocol with $\ell=1,2,3$. Error bars correspond to one standard deviation. (a) Assignment matrix corresponding to graph $G_1$. (a) Assignment matrix corresponding to graph $G_2$.}}
\end{figure}
Fig. \ref{fig:rand_comp_time} shows the results of a similar experiment comparing the random chunk ordering and our optimal chunk ordering in terms of completion time for exact GC. The assignment matrix corresponds to the graph $G_1$ discussed in Section \ref{sec:num_exp_comp}. 
The random ordering is chosen by sampling 100 independent random orderings and selecting one with the smallest $Q_{max}(O)$. A data point is then generated by running 1000 simulations with the selected random ordering. The results indicate that in an average sense, the completion time of the optimal ordering is clearly lower.

\begin{figure}[!h]
\begin{center} 
\includegraphics[scale=0.5]{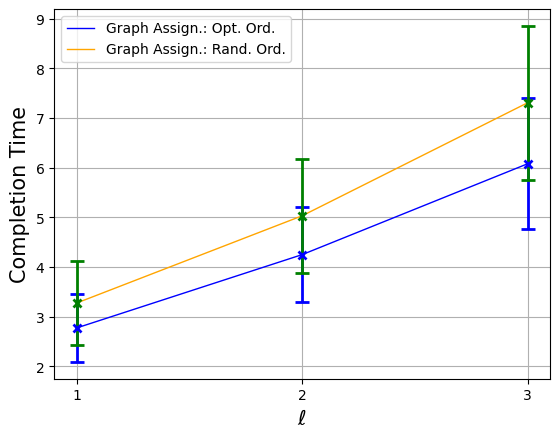}
\caption{\label{fig:rand_comp_time} {\small Completion time vs. $\ell$ for the exact GC scenario with random chunk ordering and optimal chunk ordering. Error bars correspond to one standard deviation.}}
\end{center}
\end{figure}

\clearpage    

\end{document}